\newcommand\independent{\protect\mathpalette{\protect\independenT}{\perp}}
\def\independenT#1#2{\mathrel{\rlap{$#1#2$}\mkern3mu{#1#2}}}
\newtheorem{prop}{Proposition}
\newtheorem{assumption}{Assumption}
\newlist{Properties}{enumerate}{2}
\setlist[Properties]{label=Property \arabic*., font=\textbf, itemindent=*}
\newenvironment{customprop}[1]
{\prop}
{\endprop}
\title{Examining the Efficacy of Coarsened Exact Matching as an Alternative to Propensity Score Matching}
\author{
  Fei Wan$^{1}$\\[1em]
  $^1$ Division of Public Health Sciences,Washington University in St. Louis, MO,USA \\[0.5em]
  \texttt{wan.fei@wustl.edu} \\[0.5em]
}
\date{}
\begin{document}

\maketitle

\begin{abstract}
Coarsened exact matching (CEM) is often promoted as a superior alternative to propensity score matching (PSM) for addressing imbalance, model dependence, bias, and efficiency. However, this recommendation remains uncertain. First, CEM is commonly mischaracterized as exact matching, despite relying on coarsened rather than original variables. This inexactness in matching introduces residual confounding, which necessitates accurate modeling of the outcome-confounder relationship post-matching to mitigate bias, thereby increasing vulnerability to model misspecification. Second, prior studies overlook that any imbalance between treated and untreated subjects matched on the same propensity score is attributable to random variation. Thus, claims that CEM outperforms PSM in reducing imbalance are unfounded, particularly when using metrics like Mahalanobis distance, which do not account for chance imbalance in PSM. Our simulations show that PSM reduces imbalance more effectively than CEM when evaluated with multivariate standardized mean differences (SMD), and unadjusted analyses indicate greater bias with CEM. While adjusted analyses in both CEM with autocoarsening and PSM may perform similarly when matching on few variables, CEM suffers from the curse of dimensionality as the number of factors increases, resulting in substantial data loss and unstable estimates. Increasing the level of coarsening may mitigate data loss but exacerbates residual confounding and model dependence. In contrast, both analytical results and simulations demonstrate that PSM is more robust to model misspecification and thus less model-dependent. Therefore, CEM is not a viable alternative to PSM when matching on a large number of covariates.
\end{abstract}

\noindent\textbf{Keywords:} Propensity score matching,Coarsen Exact Matching, Imbalance, Model Dependence, Bias, Homogeneous and heterogeneous treatment effect

\section{Introduction}
\label{sec:intro}

Randomized clinical trials (RCTs) are considered the gold standard for investigating causal effects in comparative effectiveness research due to their ability to equalize the distribution of both observed and unobserved confounders across comparison groups. However, limitations such as time constraints, high costs, and ethical concerns often necessitate the use of observational studies as alternatives. In these studies, confounding remains a significant challenge, leading to the adoption of matching designs to emulate key aspects of RCTs and ensure balanced observed confounders between treatment arms \citep{stuart_matching_2010,dehejia_causal_1999,dehejia_propensity_2002}.

Exact covariate matching is statistically ideal for balancing confounders and reducing model dependence and bias, but it becomes impractical with high-dimensional confounders. The introduction of the propensity score \citep{rosenbaum_central_1983} improved matching design in comparative effectiveness research by condensing high-dimensional confounders into a single scalar, thereby alleviating the curse of dimensionality. Propensity score matching (PSM) effectively mimics an RCT under the balancing score and strong ignorability properties. However, the validity of PSM has been widely questioned among applied researchers due to a counter-intuitive phenomenon where PSM can paradoxically increase imbalance, statistical bias, and model dependence as it approaches exact matching by progressively pruning the worst-matched pairs, contrary to its intended purpose \citep{king_why_2019}. 

Coarsened Exact Matching (CEM) has been introduced as a recent advancement, claiming superior statistical properties, such as improved covariance balance, over methods like PSM \citep{iacus_causal_2012}. It has been adopted in studies published in leading clinical journals as a superior alternative to PSM \citep{haider_comparative_2013,Muennig_2016,wharam_breast_2018,sceats_nonoperative_2019,liu_propensity-score_2023,george_outcomes_2024}. CEM involves dividing continuous variables into bins or grouping categorical variables, a process known as ``coarsening'', and matching subjects based on these coarsened variables, followed by analysis using the original variables. While some studies highlight CEM's advantages over PSM in reducing imbalance, statistical bias, and model dependence \citep{king_why_2019,iacus_causal_2012}, several overlooked issues arise: 
\\
\vspace{0.1cm}
{\it{First}}, Previous studies \citep{iacus_causal_2012, king_why_2019, ripollone_implications_2018} suggest that PSM is prone to the PSM paradox and performs worse than CEM in balancing matched confounders. However, this conclusion is debatable for several reasons: i) Despite its name, CEM is an inexact matching method because it matches on coarsened confounders rather than their original forms. The inherent imbalance in CEM varies with the level of coarsening.  In contrast, the balancing score property of propensity scores ensures equal distributions of matched confounders between groups. ii) These studies often overlook that any imbalance in an exactly matched PSM design arises by chance. Even when treated and untreated subjects have identical propensity scores, random mismatches in covariates can occur \citep{rosenbaum_design_2020,wan_psm_jtcvs_2025,wan_psm_surgery_2025}. Thus, positive and negative mismatches occur with similar frequency and these mismatches tend to cancel out, and as the number of matched pairs grows, between-group imbalance becomes negligible. iii) Commonly used metrics like the Mahalanobis distance used in prior analyses fail to account for chance imbalances and do not capture the balancing score property of the propensity score accurately \citep{wan_psmparadox_2025}.
\\
\vspace{0.1cm}
{\it{Secondly}}, The inherent imbalance in matched confounders under CEM can lead to residual confounding, the magnitude of which depends on the degree of coarsening. Because coarsened matching alone cannot fully eliminate confounding, CEM requires post-matching regression adjustment using the original confounders. Unbiased effect estimation therefore hinges on correctly specifying the outcome model, increasing the risk of bias when the model is misspecified. While it is true that imbalance and model dependence in CEM are determined ex ante by the chosen level of coarsening \cite{iacus_causal_2012}, this theoretical advantage is often impractical. In high-dimensional settings, fine coarsening (e.g., auto-coarsening) results in excessive data loss, leaving only very coarse groupings as a feasible option. Such coarse grouping can produce substantial residual confounding, thereby greatly increasing model dependence and the risk of misspecification bias. In contrast, exactly matched PSM permits unbiased estimation via a simple difference in means \citep{rosenbaum_central_1983}, without requiring outcome modeling. Consequently, claims that CEM consistently reduces model dependence and bias more effectively than PSM are questionable.
\\
\vspace{0.1cm}
{\it{Lastly}}, As a covariate matching method, CEM is vulnerable to the curse of dimensionality \citep{Black_2020, Ripollone_2019}. Ripollone et al. \citep{Ripollone_2019} showed that CEM often leads to high bias and low precision in risk ratio estimates due to substantial reductions in sample size and outcome counts, a form of sparse data bias, especially when matching on high-dimensional covariates. While increasing the level of coarsening can alleviate this issue, it may also exacerbate residual confounding and model dependence. In contrast, PSM avoids the curse of dimensionality but could be less efficient than covariate matching when both methods produce a similar number of matched pairs \citep{wan_psmparadox_2025}.

Although previous studies have identified challenges with CEM, particularly sparse data bias from auto-coarsening in high-dimensional settings \citep{ripollone_implications_2018, Black_2020}, they largely overlook the method’s inherent inexactness and resulting residual confounding. This gap in literature has contributed to the common misconception that CEM performs exact matching and has obscured the complexities of post-matching adjustment required to address residual confounding bias. Comparisons with PSM are often limited or methodologically flawed \citep{wan_psmparadox_2025}. A balanced evaluation of CEM and PSM in terms of precision and bias across varying covariate dimensions is needed to provide clearer methodological guidance. To address prevailing misconceptions and fill gaps in the literature, this study aims to comprehensively compare CEM and PSM in their ability to reduce covariate imbalance, model dependence, and statistical bias. We also demonstrate the robustness of PSM to model misspecification analytically, leveraging its foundational properties of balancing scores and ignorability.

\section{Method}
\label{sec2}

\subsection{Definitions, Assumptions, Causal effect estimand}
\label{sec21}

\subsubsection{Definitions and Assumptions}
Let $Y$ denote a continuous outcome variable, and let $W \in \{0,1\}$ represent the binary treatment status, where 1 indicates the active treatment of interest and 0 indicates the control condition. Additionally, let $\bm{X}$ be a vector of $p$ baseline confounding variables, and $\bm{C}$ be the corresponding $p$-dimensional vector of coarsened versions of $\bm{X}$. We posit the following assumptions.

\begin{assumption}{The Stable Unit Treatment Value Assumption (SUTVA):} SUTVA has two components:
	\begin{itemize}
		\item[(a)] The potential outcomes for a subject are unaffected by the treatment assignments of other subjects (No interference).
		\item[(b)] For each unit, there are no different versions of each treatment, which lead to different potential outcomes. In other words, the potential outcome under a subject’s actual treatment assignment precisely matches the subject’s observed outcome (Consistency).
	\end{itemize} 
	\label{assumption1}
\end{assumption} 
SUTVA (a) allows us to use $Y_i(1)$ and $Y_i(0)$ to represent the two potential outcomes for individual $i$. Here, $Y_i(1)$ signifies the outcome that would have been observed had that individual received the treatment of interest, while $Y_i(0)$ indicating the counterfactual outcome had the individual received the control. SUTVA (b) allows us to express observed outcome for individual $i$ in terms of potential outcomes as: $Y_i = W_iY_i(1) + (1 - W_i)Y_i(0)$.

\begin{assumption}{The Conditional Ignorable Treatment Assumption:}	the potential outcomes are independent of treatment assignment given $\bm{X}$
	\begin{align*}
		\{Y_i(1),Y_i(0)\} \independent W_i | \bm{X}_i
	\end{align*}
	\label{assumption2}
\end{assumption}  
The treatment status for individual $i$ can be considered as randomly assigned conditional on covariates $\bm{X}_i$. Assumption (\ref{assumption2}) is also referred to as the ``unconfoundedness'' assumption and implies that there are no unmeasured confounders and all the confounders are included in $\bm{X}$.

\begin{assumption}{Positivity:}	the conditional probability of being assigned to either treatment given $\bm{X}$ is bounded away from 0 and 1
	\begin{align*}
		0<\mathbb{P}(W_i=1|\bm{X}_i)<1	\, ,
	\end{align*}
	where $\mathbb{P}(W_i=1|\bm{X}_i)$ denotes the probability of being assigned to the treatment group given the observed covariates $\bm{X}$.
	\label{assumption3}
\end{assumption}

\begin{assumption}{Linear outcome model:} In presence of a homogeneous treatment effect, the conditional expectation of $Y$ is assumed to take the following form \citep{king_why_2019}:
	\begin{align}
		\mathbb{E}(Y_i|W_i,\bm{X}_i)=\beta_0+\beta_1 W_i+ g(\bm{X}_i)
		\label{eq_1}
	\end{align}
	Here, $g(\cdot)$ is some arbitrary function. When the effects of $\bm{X}_i$ on $Y_i$ are linear additive, $g(\bm{X}_i)=\bm{\beta}_2 \bm{X}_i$, where $\beta_2$ is $p\times 1$ vector of coefficients measuring the effects of $\bm{X}$ on $Y$.
	In presence of a heterogeneous treatment effect, we assume the following interaction model:
	\begin{align}
		\mathbb{E}(Y_i \mid W_i, \boldsymbol{\tilde{X}}_i) = \beta_0 + \beta_1 W_i + (W_i \boldsymbol{\tilde{X}}_i) \boldsymbol{\theta}^\top + g(\bm{X}_i),
		\label{eq_2}	
	\end{align}
	\label{assumption4} 
	where  $\boldsymbol{\tilde{X}}_i$ is a $1 \times K$ subset of $\boldsymbol{X}$, with $0 \leq K \leq p$. $\bm{\theta}_{1\times K}$ represents the $W$ by $\boldsymbol{\tilde{X}}_i$ interaction effects.
\end{assumption}  

\subsubsection{Estimands}

({\em Homogeneous treatment effect:}) Common causal quantities of interest include the population average treatment effect (PATE) and the population average treatment effect among the treated (PATT). Assuming a super-population with $N$ units and $N_1$ treated subject, PATE is defined as
\begin{align*}
	\tau_{PATE}&=\mathbb{E}(Y_i(1)-Y_i(0)) \\
	&=\frac{1}{N}\sum^N_{i=1} \left( Y_i(1)-Y_i(0) \right)
\end{align*}
and PATT is defined as
\begin{align*}
	\tau_{PATT}&=\mathbb{E}(Y_i(1)-Y_i(0)|W_i=1) \\
	&=\frac{1}{N_1}\sum^{N_1}_{i=1} \left( Y_i(1)-Y_i(0) \right)
\end{align*}
PSM targets for PATT. If the treatment effect is homogeneous (the same across all individuals), the PATT and PATE will be identical. In such cases, the treatment effect is constant across all subjects, and it does not matter whether you focus on the treated group or the entire population.

Under the assumptions (1-4), $\tau_{PATE}$ can be expressed as
\begin{align*}
	\tau_{PATE}&=\mathbb{E}_{\bm{X}}\mathbb{E}(Y_i(1)-Y_i(0)|\bm{X}) \\
	&=\mathbb{E}_{\bm{X}}(\mathbb{E}(Y_i|W_i=1,\bm{X})-\mathbb{E}(Y_i|W_i=0,\bm{X})) \\
	&=\beta_1
\end{align*}
Similarly, we can show that $\tau_{PATT}=\beta_1$. PATE=PATT because Assumption (\ref{assumption4}) implies that $\beta_1$ in model (\ref{eq_1}) represents a homogeneous treatment effect. 

Other studies \citep{king_why_2019,iacus_causal_2012} investigated the average treatment effect in the specific sample, rather than in the population. Let $S$ denote a sample of size $n$ with $n_1$ treated subjects, the sample-average treatment effect (SATE) can be defined as  
\begin{align*}
	\tau_{SATE}&=\mathbb{E}(Y_i(1)-Y_i(0)|S_i=1) \\
	&=\frac{1}{n}\sum^n_{i=1} \big ( Y_i(1)-Y_i(0) \big)
\end{align*}

Similarly, sample treatment effect among the treated (SATT) is defined as
\begin{align*}
	\tau_{SATT}&=\mathbb{E}(Y_i(1)-Y_i(0)|W_i=1,S_i=1) \\
	&=\frac{1}{n_1}\sum^{n_1}_{i=1} W_i \big ( Y_i(1)-Y_i(0) \big)
\end{align*} 

Under assumption (\ref{assumption4}), we have $\tau_{PATE} = \tau_{SATE} = \tau_{PATT} = \tau_{SATT} = \beta_1$. In the absence of treatment effect heterogeneity, all these quantities coincide and equal the constant treatment effect. 

({\em Heterogeneous treatment effect}:) In the presence of heterogeneous treatment effects, however, the values of SATE and SATT depend on the specific units included in the sample and may vary across different samples. Regardless of whether the current sample is drawn from a super-population, SATE and SATT reflect the average treatment effect within the sample itself. Similarly, with treatment effect heterogeneity, PATE and PATT can also differ. Specifically,

At the individual level, we define:
\begin{align*}
	\tau_i = Y_i(1) - Y_i(0)
\end{align*}

At the subgroup (covariate) level, the conditional average treatment effect (CATE) is:
\begin{align*}
	\tau(x) = \mathbb{E}(\tau_i \mid X_i = x)
\end{align*}

Under the interaction model (\ref{eq_2}), the individual CATE is:
\begin{align*}
	\tau_i=\tau(\bm{\tilde{X}}_i) = \beta_1 +  \boldsymbol{\tilde{X}}_i \bm{\theta}^\top
\end{align*}

In this case, the individual treatment effect $\tau_i$ is constant among individuals with the same value of $\bm{\tilde{X}}=\bm{\tilde{x}}$ but varies across different values of $X_{1i}$. It follows:
\begin{itemize}
	\item[(1)] PATE: $\tau_{PATE}=\mathbb{E}(\tau_i)=\beta_1+ \mathbb{E}(\boldsymbol{\tilde{X}}_i)\bm{\theta}^\top$. When $\mathbb{E}(\boldsymbol{\tilde{X}}_i) =\bm{0}$, PATE simplifies to $\tau_{PATE} = \beta_1$.
	\item[(2)] PATT: $\tau_{PATT}=\mathbb{E}(\tau_i|W_i=1)=\beta_1+ \mathbb{E}(\boldsymbol{\tilde{X}}_i|W_i=1)\bm{\theta}^\top$. 
\end{itemize}
PATE and PATT differ unless $\mathbb{E}(\bm{\tilde{X}}_i|W_i=1)=\mathbb{E}(\bm{\tilde{X}}_i)$. That is, PATE = PATT if and only if the distribution of $\bm{\tilde{X}}_i$ is the same in the treated and the overall population. But in observational data, this is usually not true — people who receive treatment often differ systematically from those who don't. For example, more severe patients might be more likely to receive treatment, shifting 
$\mathbb{E}(\bm{\tilde{X}}_i|W_i=1)$ away from $\mathbb{E}(\bm{\tilde{X}}_i)$.

This study focuses on the population effect for several reasons:
i) PATT aligns with common research goals in clinical studies, where researchers aim to generalize findings to the target population \cite{austin_methods_2009, austin_comparison_2014}.
ii) PATT is widely used in methodological research evaluating propensity score methods \cite{austin_optimal_2011, wan_evaluation_2018, wan_matched_2019}.
iii) In PSM designs, imbalance occurs by chance, and interpreting such chance imbalances is more intuitive for PATT.

\subsection{Propensity score matching}

The propensity score is a conditional probability of receiving the treatment rather than the control given the observed covariates, denoted by:
\begin{align*}
	e(\bm{X})=P(W=1|\bm{X})
\end{align*}
The propensity score normally is not known in practice and is often estimated using the following logit model:
\begin{align}
	\text{logit}(W_i=1|\bm{X}_i)=\alpha_0+ \phi(\bm{X}_i),
	\label{eq_3}
\end{align}
where $\alpha_0$ represents the intercept and $\phi(\cdot)$ is an arbitrary function. When the effects of $\bm{X}_i$ on $W_i$ are linear additive, $\phi(\bm{X}_i)=\bm{\alpha}_1\bm{X}_i$, where $\bm{\alpha}_1$ is $p\times 1$ vector of regression coefficients.  The propensity score possesses the following two properties:

\begin{Properties}
	\item (\textbf{Balancing score}) The propensity score $e(\bm{X})$ balances the distribution of $\bm{X}$ between the comparison groups:
	\begin{align*}
		W \independent \bm{X} | e(\bm{X})
	\end{align*}
	
	\item (\textbf{Strongly ignorable treatment assignment (SITA)}) If $W$ is unconfounded given $\bm{X}$, then $W$ is unconfounded given $e(\bm{X})$. Formally,	
	\begin{align*}
		\{ Y(1),Y(0) \} \independent W | \bm{X} \implies	\{ Y(1),Y(0) \} \independent W| e(\bm{X})
	\end{align*}
\end{Properties}

When all confounders are observed and included in $\bm{X}$, the balancing score and strong ignorability properties of the propensity score ensure that treatment assignment is random among subjects with the same score. However, even when matching on the same propensity score, treated and untreated subjects within matched sets often show distinct covariate values, a characteristic that sets PSM apart from other covariate matching designs. These within-pair imbalances occur by chance \citep{rosenbaum_design_2020}, with positive and negative differences equally likely. As the number of matched pairs increases, the average imbalance converges toward zero, balancing $\bm{X}$ between treated and untreated groups in PSM. Thus, chance imbalances do not introduce residual confounding or bias, ensuring Property (2) holds. 
This feature can also be explained within a regression framework \citep{wan_interpretation_2021}. Instead of directly including confounders as covariates in the regression model, adjusting for the propensity score decomposes confounders into two components: the propensity score and a residual term representing the covariate differences among subjects having the same propensity scores. By conditioning on the propensity score, the residual becomes independent of treatment assignment and is treated as random noise, which does not bias treatment effect estimation.

Chance imbalance is inevitable, even in randomized designs. Senn \citep{senn_seven_2013} emphasized that randomization does not ensure balance in a single completely randomized design. While blocking can address some imbalance, unblocked variables may still differ between arms. However, chance imbalance does not bias point estimates, as any deviation from the true value reflects variability, not statistical bias. Incorporating baseline variables in an analysis of covariance (ANCOVA) can further enhance precision \cite{wan_analyzing_2019,wan_analyzing_2020,wan_analyzing_2021}. 

\subsection{Coarsened exact matching}
\label{sec23}

The central concept of CEM involves grouping values that are substantively indistinguishable for each matched variable and assigning them the same numerical value to coarsen the data. Following this, treated and untreated subjects are matched based on the exact values of these coarsened variables. After this matching process is finished, the coarsened data is discarded, while the original (uncoarsened) values of each matched variable are preserved in the matched data for subsequent post-matching regression analysis. The specific steps include:

\begin{itemize}
	\item[(i)]
	The ``coarsening'' or binning step involves creating coarsened versions, denoted as $\bm{C}$, of matched variables $\bm{X$} by dividing each matched variable $X_j$ into bins. Each bin represents a distinct value of the coarsened factor $C_j$. Coarsening process can be performed through explicit user choices based on subject knowledge or by employing automatic coarsening algorithms. For instance, an ordinal 7-point Likert scale can be grouped into a 3-point scale: \{completely disagree, strongly disagree, disagree\}, \{neutral\}, \{agree, strongly agree, completely agree\}. Continuous years of education can be categorized into different levels of schooling. In cases where researchers lack subject knowledge, automatic matching algorithms can be utilized. The default ``Sturges'' option in the $\bm{\texttt{MatchIt}}$ package ($\texttt{R}$) uses the formulae $\texttt{ceiling}(\text{log}_2(n) + 1)$ to determine the number of bins, where $n$ represents the sample size. For example, with a sample size of 300 observations, each continuous matching variable will generate 10 bins.
	\item[(ii)] The ``matching'' step involves pairing treated and control subjects based on the exact values of coarsened matched variables $\bm{C}$. Many-to-many matching ($K:M$) can be performed, grouping treated and control units with the same $\bm{C}$ values into multidimensional strata. In this context, CEM serves as a stratification method. Within the $s^{th}$ stratum ($s=1,2,\cdots,S$, and $S$ is the total number of matching strata), case subjects receive a weight of 1, while control subjects receive a weight of $\frac{m_C}{m_T}\frac{m^s_T}{m^s_C}$, where $m_C$ and $m_T$ are the total numbers of control and treated subjects, and $m^s_C$ and $m^s_T$ are the numbers of control and treated subjects in the $s^{th}$ stratum. Alternatively, 1:1 matching can be performed within matched strata to avoid using weights. The $\bm{\texttt{MatchIt}}$ package employs nearest neighbor matching without replacement within each stratum, ensuring an equal number of treated and control units in all strata.
	\item[(iii)] After preprocessing data with CEM, post-matching analysis is necessary to estimate the treatment effect. Blackwell et al. \cite{blackwell_cem:_2009} suggest that the analyst may use a simple sample mean difference or any statistical model that would have been applied without matching. However, as discussed later, unadjusted analyses are generally biased due to residual confounding inherent in CEM. To achieve an unbiased estimation of the treatment effect, it is essential to use a correctly specified regression model.
\end{itemize}

CEM does not provide exact matches on the original variables $\bm{X}$ and therefore lacks two key properties of the propensity score. First, exact matching on coarsened variables $\bm{C}$ does not guarantee balanced distributions of confounders $\bm{X}$ across comparison groups. That is, $\bm{X} \not\!\perp\!\!\!\perp W | \bm{C}$  because $\mathbb{P}(W|\bm{X}, \bm{C}) = \mathbb{P}(W| \bm{X})$ and $\mathbb{P}(W|\bm{X}, \bm{C}) \neq \mathbb{P}(W| \bm{C})$. Second, the standard ignorability assumption ${Y(1),Y(0)} \independent W | \bm{X}$ does not imply ${Y(1),Y(0)} \independent W | \bm{C}$. Consequently, residual confounding is expected when using CEM, and its magnitude depends on the degree of coarsening in $\bm{C}$. For a given coarsened stratum $\bm{C} = \bm{c}$, define the within-bin covariate imbalance as
\begin{align*}
	\delta_c = \mathbb{E}(\bm{X} \mid W = 1, \bm{C} = \bm{c}) - \mathbb{E}(\bm{X} \mid W = 0, \bm{C} = \bm{c}).
\end{align*}

As the sample size increases, the empirical mean difference converges almost surely to $\delta_c$ by the law of large numbers:
\begin{align*}
	\frac{1}{n_{1c}} \sum_{i: W_i = 1,\, \bm{C}_i = \bm{c}} \bm{X}_i - \frac{1}{n_{0c}} \sum_{i: W_i = 0,\, \bm{C}_i = \bm{c}} \bm{X}_i \xrightarrow{a.s.} \delta_c.
\end{align*}
In theory, $\delta_c = 0$ if and only if $W \perp\!\!\!\perp \bm{X} \mid \bm{C}$, meaning the coarsening is fine enough to fully block the association between treatment assignment $W$ and the original covariates $\bm{X}$. However, because coarsening is a lossy transformation, this condition rarely holds in practice. Consequently, the imbalance in $\bm{X}$ within each matched stratum is not due to random variation but is systematic, driven by residual dependence between $W$ and $\bm{X}$ within levels of $\bm{C}$. This imbalance persists regardless of sample size.

CEM is often motivated as a strategy to reduce the curse of dimensionality by matching on a lower-dimensional, coarsened version of $\bm{X}$. However, its logic becomes circular: we reduce dimensionality by coarsening, knowing coarsening loses information, yet must assume that the coarsened variables $\bm{C}$ preserve all information in $\bm{X}$ relevant to $W$ in order to justify the method’s validity \citep{iacus_causal_2012}. In reality, unless $\bm{C}$ is a sufficient statistic for $\bm{X}$ with respect to $W$, which is unlikely without exact matching, CEM introduces structural imbalance and residual confounding that do not diminish with increasing sample size.

\subsection{Imbalance Metric}
\label{sec22}

Some commonly used multivariate imbalance metrics include the Mahalanobis distance, $L_1$ imbalance metrics, and absolute between group mean difference. These metrics are defined as follows:
\begin{itemize}
	\item[(i)] The averaged pairwise Mahalanobis distance \citep{king_why_2019, King2011}:
	\begin{align*}
		I_1(\bm{X})=\text{mean}_{i\in \{i\}}d(\bm{X}_i,\bm{X}_{j(i)})
	\end{align*} 
	Here $I_1(\bm{X})$ represents the average pairwise distance between treated subject $i$ with covariates $\bm{X_i}$, and the closest untreated subject $j$ with covariates $\bm{X_{j(i)}}$. $d(\cdot)$ is the Mahalanobis distance:
	\begin{align*}
		d(\bm{X}_i,\bm{X}_{j(i)})=\sqrt{(\bm{X}_i-\bm{X}_{j(i)})^{'}\Sigma^{-1}(\bm{X}_i-\bm{X}_{j(i)})}
	\end{align*}
	where $\Sigma$  represents the sample
	covariance matrix of the original data. 
	\item[(ii)] The between-group  Mahalanobis distance \cite{ripollone_implications_2018}:
	\begin{align*}
		I_2(\bar{\bm{X}}_1,\bar{\bm{X}}_0)=\sqrt{(\bar{\bm{X}}_1-\bar{\bm{X}}_{0})^{'}\Sigma^{-1}(\bar{\bm{X}}_1-\bar{\bm{X}}_{0})}
	\end{align*}
	Here $\bar{\bm{X}}_1$ and $\bar{\bm{X}}_0$ are the vectors of covariate means in the treated and untreated groups. 
	\item[(iii)] The absolute group-mean-difference based imbalance metric \cite{King2011}:
	\begin{align*}
		I_3(\bar{\bm{X}}_1,\bar{\bm{X}}_0)=\sum_{j=1}^J |d(j)|	
	\end{align*} 
	Here $d(j)=\bar{X}_{1j}-\bar{X}_{0j}$ is the group mean difference for the $j$th confounder. $I_3$ is the sum of the absolute mean differences for all $J$ confounders.
	\item[(iv)] The $L_1$ imbalance metric \cite{King2011}:
	\begin{align*}
		I_4(H)=\sum_{(\ell_1\cdots\ell_k)\in H} |f_{\ell_1\cdots\ell_k}-g(\ell_1\cdots\ell_k)|	
	\end{align*} 
	$H$ is a chosen set of bin sizes. $f_{\ell_1\cdots\ell_k}$  is the relative empirical frequency of treated subjects in bin with coordinates $\ell_1\cdots\ell_k$, and similarly for $g(\ell_1\cdots\ell_k)$ among control units. $I_4(H)$ is the difference of the multivariate histogram of the treated units and the multivariate histogram of the control units.
\end{itemize}

Prior studies \citep{king_why_2019,ripollone_implications_2018,Ripollone_2019,King2011} used the imbalance metrics listed above to demonstrate a persistent worsening of imbalance in PSM, as worst-matched pairs were removed and PSM approached exact matching in simulation studies and in single data sets. However, there are several issues in these findings:
\begin{itemize}
	\item[(i)] In simulation studies, matched data is created from each random sample of finite size. Due to chance imbalances, between-group differences in confounders can be positive in some matched samples but negative in others. However, these differences average out to zero across all matched samples. Imbalance metrics $I_1-I_4$ ignore the direction of chance imbalance, always returning absolute values, which cannot average to zero.
	
	\item[(ii)] It is improbable to demonstrate chance imbalance in a single dataset, particularly as sample sizes decrease due to progressive pruning, since balancing is a property of large samples. In a single matched dataset, chance imbalance implies that differences between treated and untreated subjects may be positive in some matched pairs and negative in others. As the number of matched pairs increases (rather than decreasing due to progressive pruning of worst-matched pairs), the average between-group imbalance tends to approach zero. Consequently, the worsening imbalance in PSM observed during continual pruning \cite{king_why_2019}, as assessed using imbalance metrics $I_1-I_4$, reflects a higher likelihood of significant chance imbalance as sample sizes shrink, rather than an increase in systematic imbalance that could introduce confounding bias. A similar phenomenon is observed in randomized trials, where significant baseline covariate imbalances are more likely in smaller studies compared to larger ones.

\end{itemize}

To accurately measure the overall between-group imbalance across multiple matched variables, while accounting for the direction of chance imbalances, we propose a multivariate imbalance metric based on the standardized mean difference (SMD) for use in the simulation study:

\begin{align}
	I_5(\bar{\bm{X}}_1,\bar{\bm{X}}_0)=\sum^J_{j=1} \left |\frac{\sum^K_{k=1} \text{SMD}_{j}^{(k)}}{K} \right | ,
	\label{eq_4}
\end{align}
and
\begin{align*}
	\text{SMD}_j^{(k)}=	\frac{(\bar{X}^{(k)}_{1j}-\bar{X}^{(k)}_{0j})}{\sqrt{\frac{(S^2_{1j})^{(k)}+(S^2_{0j})^{(k)}}{2}}}
\end{align*}
Here, $\bar{X}^{(k)}_{1j}$, $\bar{X}^{(k)}_{0j}$, $(S^2_{1j})^{(k)}$, and $(S^2_{0j})^{(k)}$ represent sample means and variances of the $j^{th}$ confounder in the treated and untreated groups from the $k^{th}$ simulated sample. SMD is a commonly used metric in balance diagnosis and captures both the size and direction of imbalance. Formula (\ref{eq_4}) involves two steps: 
\begin{itemize}
	\item[(i)]  For each matching variable $X_j$, where $j=1,2,\cdots,J$, we compute the SMD of $X_j$ between the treated and untreated groups in the $k^{th}$ sample, where $k=1,2,\cdots, K$. We then sum up $K$ SMDs across all $K$ simulated samples and compute the averaged SMD for $X_j$.  
	\item[(ii)] Next we sum up the $J$ averaged SMDs in their absolute values with equal weighting over all confounders. Equal weighting is justifiable because the SMDs for all the confounders are normalized to have zero mean and unit standard deviation. 
\end{itemize}
If the covariate difference between treated and untreated subjects within matched pairs occurs by chance, the SMD of a confounder may be negative in some simulated samples and positive in others. However, these SMDs would average to zero with a large number of simulated samples (i.e., $\displaystyle  \lim_{K \to \infty}\frac{\sum^K_{k=1} \text{SMD}^{(k)}}{K} =0$). $I_5$ is specifically crafted to assess the chance imbalance in simulation studies, in which different random samples are drawn. If within-pair covariate difference occurs not by chance, $I_5$ would generally deviate from zero.

\subsection{Model Dependence and Estimation Bias}

Model dependence refers to the sensitivity of treatment effect estimation to the statistical models employed. A significant drawback of using regression to control for confounding is its reliance on correctly modeling the outcome-confounder relationship, making it susceptible to model misspecification bias. Incorrectly specifying this relationship, such as by ignoring non-linearities or interaction terms, can introduce bias into the estimation of the treatment effect. Conversely, a good matching design acts as a non-parametric preprocessing tool that reduces model dependence in subsequent analysis, offering a distinct advantage over regression adjustment \citep{ho_matching_2007}. When matching is exact, confounders are evenly distributed between comparison groups, allowing a simple mean difference to provide an unbiased estimate of the conditional treatment effect, similar to analyzing an RCT \citep{ho_matching_2007,guo_statistical_2023}. Even with imperfect matching, applying a misspecified regression model to matched data is less biased than when applied to the original unmatched data \citep{guo_statistical_2023}. These findings align with the earlier assertion that a well-designed matching process, coupled with subsequent regression adjustment, generally yields the least biased estimates \citep{rubin_use_1973}. The reduction in model dependence through matching can be attributed to its ability to balance any function of $\bm{X}$, thereby rendering $W$ and any function of $\bm{X}$ approximately orthogonal in the matched sample. Consequently, according to least-squares theory, the inclusion or exclusion of a nearly orthogonal predictor has negligible effects on other regression coefficients \cite{guo_statistical_2023}.

However, there is still no consensus on the appropriate metric for measuring model dependence in a design. King and Nelsen \citep{king_why_2019} compared model dependence and statistical bias across various matching designs. They assumed researchers, uncertain about the correct model, would explore many different regression models, differing in sets and functional forms of $\bm{X}$, during post-matching analysis and then select the most favorable estimate-a process commonly known as ``cherry-picking.''. In their demonstration, they tested 512 models for two matched variables. Model dependence was measured by the variance in treatment effect estimates from this cherry-picking process used to analyze a matched dataset, while bias was defined as the difference between the maximum estimate and the true causal parameter. Formally, let $\boldsymbol{\hat{\beta}_1}=\langle \hat{\beta}_{1,\mathcal{M}_1}, \hat{\beta}_{1,\mathcal{M}_2}, \cdots, \hat{\beta}_{1,\mathcal{M}_K} \rangle$ represent a $1 \times K$ vector of estimates for $\beta_1$ from $K$ models, with $\hat{\beta}_{1,\mathcal{M}_k}$ denoting the estimate from model $\mathcal{M}_k$ ($k=1, 2, \dots, K$). Model dependence, measured as the variance in these estimates, is given by $\sigma_\mathcal{M}^2=\text{Var}\langle \hat{\beta}_{1,\mathcal{M}_1}, \hat{\beta}_{1,\mathcal{M}_2}, \cdots, \hat{\beta}_{1,\mathcal{M}_K}\rangle$ and estimated by
\begin{align*}
	\hat{\sigma}_\mathcal{M}^2 = \frac{\sum^K_{k=1} (\hat{\beta}_{1,\mathcal{M}_k} - \bar{\hat{\boldsymbol{\beta}}}_1)}{K-1}, 
\end{align*}
where $\bar{\hat{\boldsymbol{\beta}}}_1=\frac{\sum^K_{k=1} \hat{\beta}_{1,\mathcal{M}_k}}{K}$. This variance metric differs from commonly used metrics like mean square error (MSE) for a single parameter, which is defined as follows:
\begin{align*}
	\text{MSE}(\hat{\beta}_{1,\mathcal{M}_k}) =\mathbb{E}(\hat{\beta}_{1,\mathcal{M}_k}-\beta_1)^2 =\text{Var}(\hat{\beta}_{1,\mathcal{M}_k})+\text{Bias}(\hat{\beta}_{1,\mathcal{M}_k})^2	
\end{align*}
Notably, $\sigma_\mathcal{M}^2$ does not account for the bias component because it does not include the true causal parameter. Since researchers are suggested to select the largest estimate for the treatment effect, we denote it by $\hat{\beta}_{1,max}=\text{Max}\langle \hat{\beta}_{1,\mathcal{M}_1}, \hat{\beta}_{1,\mathcal{M}_2}, \cdots, \hat{\beta}_{1,\mathcal{M}_K}\rangle$. 

King and Nelsen \citep{King2011} found that as PSM approaches exact matching through continued pruning of the relatively worst-matched pairs, it paradoxically increases imbalance, model dependence, and statistical bias, whereas other covariate-matching methods avoid these issues. However, measuring model dependence and statistical bias based on a cherry-picking process is problematic for several reasons:
\\
\vspace{0.1cm}
{\it{First}}, the metric $\sigma_\mathcal{M}^2$ may conflate model dependence with design efficiency and sampling variability of effect estimators. Using $\sigma_\mathcal{M}^2$ to quantify model dependence likely stems from the idea that estimates in a less model-dependent design should be closer to each other. However, in a more efficient design, estimates from different models may also exhibit less volatility and thus appear closer. PSM is typically less efficient than other covariate-matching methods, as it matches on a summary score rather than directly on covariates. Nevertheless, it offers a practical solution to the curse of dimensionality, a major limitation of covariate-matching designs. Additionally, $\sigma_\mathcal{M}^2$ can be influenced by the variance estimator of $\hat{\beta}_{1,\mathcal{M}k}, k=1,2\cdots,K$, which is inherently model-specific and also depends on sample size. As pruning continues in a shrinking dataset, the variability of individual estimates increases, leading to a rise in $\sigma_\mathcal{M}^2$.  Lastly, $\sigma_\mathcal{M}^2$ does not account for the bias component. While CEM may yield less variable estimates than PSM, as PSM is generally less efficient than covariate-matching designs, unbiased estimation in CEM still relies on correctly modeling the outcome-confounder relationship due to its inexact matching. Therefore, it is more appropriate to define the model dependence of a matching design strictly as the reliance on correct modeling for unbiased estimation in post-matching analysis, distinguishing it as a property separate from design efficiency.\\
\vspace{0.1cm}
{\it{Secondly}}, robustness against cherry-picking should not be a criterion for validating a matching design, as this practice can yield extreme effect estimates even in randomized studies \citep{tsiatis_covariate_2008}. The theory of order statistics shows that $\hat{\beta}_{1,max}$ is biased, even when individual estimates are unbiased. Design efficiency also affects $\hat{\beta}_{1,max}$, as more efficient designs produce tighter estimates with lower variance, potentially making the maximum less extreme. Moreover, $\hat{\beta}_{1,max}$ depends on sample size. As pruning continues in a shrinking dataset, increasing variability among individual estimates leads to a more extreme $\hat{\beta}_{1,max}$.
\\
\vspace{0.1cm}
{\it{Finally}}, when researchers engage in cherry-picking analysis, they may lose sight of the purpose of matching designs and the rationale behind using a simple group mean difference—an unadjusted analysis and, by default, a misspecified model that completely ignores the modeling of matched confounders—as a valid, unbiased effect estimator\citep{rosenbaum_central_1983}.  The ultimate goal of designing a matched comparative effectiveness study is to identify an unbiased estimate of treatment effect,not searching for some ``best'' effect estimate.  Matching designs require navigating complex algorithms and often discarding valuable unmatched data. If, after all these efforts, researchers still struggle to select the correct outcome model from numerous candidates in post-matching analysis, they should reconsider whether matching offers any advantage over regression adjustment in the original unmatched data.  

A more appropriate definition of model dependence for a matching design is whether unbiased effect estimation requires a correctly specified model. Instead of assessing the similarity of treatment effect estimates from different models within a single matched dataset, we should assess whether certain misspecified models in the evaluated matching design can still produce unbiased results. To assess model dependence across different designs and avoid conflating it with design efficiency, we will compare the biases arising from commonly used misspecified models applied to both designs. A good matching design should reduce sensitivity to model dependence, resulting in less bias from misspecified models compared to unmatched data. If one design mitigates model dependence more effectively, the same misspecified model should yield a less biased estimate in that design. We hypothesize that PSM will be more robust than CEM, as the residual confounding in CEM’s coarsening requires accurate modeling of the outcome-confounder relationship, increasing model dependence and the risk of bias. In contrast, we prove that a class of misspecified regression models with any combination of confounders in their linear terms can asymptotically yield unbiased ATT estimates in an exactly matched PSM design, as demonstrated in the following proposition:

\begin{prop}
	Suppose the true outcome model is defined in equation (\ref{eq_1}) and we fit a misspecified linear regression model in an exactly matched PSM design to estimate ATT, defined as follows:
	\begin{align*}
		\mathbb{E}(Y_i|\boldsymbol{\tilde{X}_i})=\gamma_0+\gamma_1 W_i +  \boldsymbol{\tilde{X}_i} \boldsymbol{\gamma_{2}}^\top \, , \,  i=1,2,\cdots, N,
	\end{align*}
	where $\boldsymbol{\tilde{X}}_i$ is a $1 \times K$ subset of $\boldsymbol{X}$, with $0 \leq K \leq p$. $\boldsymbol{\tilde{X}}_i$ may be an empty set or include any combination of confounders $\boldsymbol{X}$. $\boldsymbol{\gamma_{2}}$ is $1 \times K$ vector of regression coefficients. Here, $N$ represents the sample size of the PSM design. The resulting OLS estimator satisfies
	\begin{align*}
		\hat{\gamma}_1 \overset{p}{\to} \mathbb{E}(Y(1)-Y(0)|W=1) ,
	\end{align*} 
\end{prop}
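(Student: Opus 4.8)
\section*{Proof Proposal}

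The plan is to exploit the balancing-score property to show that, in the exactly matched sample, the treatment indicator $W$ is orthogonal to every function of the covariates, so that the OLS slope on $W$ collapses to a simple difference of means regardless of which subset $\boldsymbol{\tilde{X}}$ enters the working model; identification of the ATT then follows from strong ignorability. First I would characterize the population from which the exactly matched PSM sample is drawn. Since PSM targets the ATT, the matched sample has $e(\bm{X})$ distributed as in the treated arm, and within each matched stratum the treated and control units share the same value of $e(\bm{X})$. Invoking the balancing-score property $W \independent \bm{X} \mid e(\bm{X})$, conditional on $e(\bm{X})$ the covariate vector $\bm{X}$ has the same law in both arms; averaging over the common (treated) distribution of $e(\bm{X})$ then yields the marginal independence $W \independent \bm{X}$ in the matched population. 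Consequently $W$ is uncorrelated with $\boldsymbol{\tilde{X}}$ and, more strongly, with $g(\bm{X})$ and any other function of $\bm{X}$.

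Next I would take probability limits of the OLS estimator. By the law of large numbers the sample cross-moments converge to their population counterparts in the matched distribution, and the continuous-mapping theorem transfers this to $\hat{\gamma}_1$. Applying the Frisch--Waugh--Lovell decomposition, $\hat{\gamma}_1$ is asymptotically the coefficient from regressing $Y$ on the residual of $W$ after partialling out $(1, \boldsymbol{\tilde{X}})$. Because $W \independent \boldsymbol{\tilde{X}}$, the best linear predictor of $W$ given $(1, \boldsymbol{\tilde{X}})$ is the constant $\mathbb{E}(W)$, so this residual is simply $W - \mathbb{E}(W)$. A short covariance computation then gives $\hat{\gamma}_1 \overset{p}{\to} \mathrm{Cov}(Y,W)/\mathrm{Var}(W) = \mathbb{E}(Y \mid W=1) - \mathbb{E}(Y \mid W=0)$, the matched-sample difference in means. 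The key point is that the omitted nonlinear term $g(\bm{X})$ and any excluded covariates are absorbed into the error but remain uncorrelated with $W$, so the misspecification leaves $\gamma_1$ untouched; this is precisely where the arbitrary choice of $\boldsymbol{\tilde{X}}$ becomes irrelevant.

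Finally I would identify this difference of means with the ATT. Using consistency, $\mathbb{E}(Y \mid W=1) = \mathbb{E}(Y(1) \mid W=1)$ in the matched sample. For the matched controls, strong ignorability on the score gives $\mathbb{E}(Y(0) \mid W=0, e(\bm{X})) = \mathbb{E}(Y(0) \mid e(\bm{X}))$, and averaging over the treated distribution of $e(\bm{X})$ by iterated expectations yields $\mathbb{E}(Y \mid W=0) = \mathbb{E}(Y(0) \mid W=1)$. Subtracting gives $\mathbb{E}(Y(1)-Y(0) \mid W=1)$, as claimed. As a consistency check, substituting the true model (\ref{eq_1}) directly shows the matched difference equals $\beta_1$, since the $g(\bm{X})$ contributions to the treated and control means coincide under balancing.

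I expect the main obstacle to be the first step: the balancing property delivers only the conditional independence given $e(\bm{X})$, so care is needed to pass to marginal independence in the matched population and to specify correctly the ATT-weighting under which the matched controls inherit the treated distribution of the score. One must also be explicit that exactness is taken as given (an idealization of finite-sample matching), so that the $\overset{p}{\to}$ statement is a clean large-$N$ limit under the matched-sample distribution. Once marginal independence is secured, the orthogonality argument and the ignorability-based identification are routine.
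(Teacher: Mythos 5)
Your proposal is correct and follows essentially the same route as the paper's proof: a Frisch--Waugh--Lovell partialling-out argument in which the balancing-score property forces the projection of $W$ onto $(1,\boldsymbol{\tilde{X}})$ to collapse to the constant treatment proportion, so that $\hat{\gamma}_1$ converges to $\mathrm{Cov}(W,Y\mid S=1)/\mathrm{Var}(W\mid S=1)$, i.e.\ the matched difference in means, which is then identified with the ATT via strong ignorability on the score. Your explicit justification of the passage from $W \independent \bm{X} \mid e(\bm{X})$ to marginal independence in the matched population is a point the paper asserts more tersely, but it is the same argument.
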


\begin{proof}
	Details in the Appendix.	
\end{proof}	
Thus, in an exactly matched PSM design, unbiased effect estimation does not require the correct model (\ref{eq_1}). An unadjusted model (e.g., sample mean difference) or a linear model including any matched confounders in linear terms suffices, without concern for nonlinear or interaction terms. This reduction in model dependence is a desirable property of an effective matching design, compensating for the complexity of matching algorithms and associated data loss. Notably, not all commonly used matching designs can reduce model dependence. For instance, matching in case-control studies may not eliminate all backdoor paths between the exposure and outcome, and it can also complicate the outcome-covariate relationship, increasing model dependence in post-matching analysis \citep{wan_mat_2021,wan_conditional_2022,wan_does_2024}. In the following simulation study, we will analyze both PSM and CEM designs using an unadjusted model, $\mathcal{M}(W)$, and an adjusted model with linear terms of $\bm{X}$, $\mathcal{M}(W, \bm{X})$. Additionally, we will apply $\mathcal{M}(W)$ and $\mathcal{M}(W, \bm{X})$ to the original unmatched data.

\subsection{Relative Efficiency of PSM vs. Covariate Matching Design}

Consider a 1:1 exact covariate matching on confounders $\boldsymbol{\mathbf{X}}$, forming $M$ matched pairs. In comparison, a 1:1 matching design on exact logit propensity scores forms $N$ matched pairs. The relative efficiency of the treatment effect estimator $\bar{Y}_{1.} - \bar{Y}_{0.}$ under PSM compared to covariate matching is given by \cite{wan_psmparadox_2025}:  
\begin{align*}
	\text{Relative Efficiency of PSM vs CEM} = \frac{\sigma^2_{\epsilon}}{\sigma^2_{\nu} + \sigma^2_{\epsilon}} \frac{n}{m},
\end{align*}
where $\sigma^2_{\nu} = \sin^2(\theta) \, \boldsymbol{\beta_2}^T \tilde{\Sigma}_{\mathbf{X}} \boldsymbol{\beta_2}$, and $\tilde{\Sigma}_{\mathbf{X}}$ is the $p \times p$ variance-covariance matrix of $\mathbf{X}$ in the matched population under PSM. $\theta$ is the angle between two coefficient vectors $\boldsymbol{\alpha_1}$ and$\boldsymbol{\beta_2}$. $\sigma^2_{\epsilon}$ represents the error variance for model (\ref{eq_1}).  When both designs have similar number of matched pairs ($n\approxeq m$),  $\frac{\sigma^2_{\epsilon}}{\sigma^2_{\nu} + \sigma^2_{\epsilon}} < 1$. In this case, PSM is less efficient because it uses only partial information from $\boldsymbol{X}$, specifically $e(\boldsymbol{\mathbf{X}})$ as the blocking factor. However, when the number of matching variables is large, covariate matching encounters the curse of dimensionality, leading to significantly fewer matched pairs. When $m \ll n$, the variance reduction achieved by covariate matching may not offset the data loss. In this case, PSM becomes more efficient.

As a covariate matching method, CEM may outperform PSM in efficiency when the number of matching factors is limited, and the number of matched pairs is comparable to PSM. However, unbiased effect estimation in CEM depends more heavily on accurate modeling. The choice between CEM and PSM involves a trade off between statistical bias and precision, which can be evaluated using MSE in the simulation study.

\section{Simulation}
\label{sec5}
In this simulation study, we aim to demonstrate the following key points: 
\begin{itemize}
	\item[(i)] {\em Systematic Imbalance in CEM}: Unlike PSM, imbalance in CEM is systematic and does not diminish with increasing sample size. As a result, CEM is more prone to residual confounding bias, whereas PSM—when using a reasonable caliper—better approximates a randomized design. The unadjusted analysis $\mathcal{M}(W)$ is substantially less biased under PSM than CEM.
	\item[(ii)] {\em Reduction in Model Dependence}: Both CEM (``Auto'') and PSM reduce model dependence. When applying the same misspecified model $\mathcal{M}(W,\bm{X})$, bias is significantly lower with either method compared to the unmatched data.
	\item [(iii)] {\em Curse of Dimensionality in CEM:} As the number of matched variables increases, CEM suffers from the curse of dimensionality, leading to substantial data loss and unstable estimates. PSM avoids this issue.
	\item [(iv)] {\em Trade-off with Coarsening Level:} Increasing the coarsening level in CEM can partially mitigate data loss but at the cost of higher imbalance, greater statistical bias, and increased model dependence.
	\item [(V)]{\em Heterogeneous treatment effect}: We aim to assess whether similar patterns persist in the presence of heterogeneous treatment effects and to investigate how the proportion of treated subjects and model misspecification may influence statistical bias.
\end{itemize}

\subsection{Simulation design}

\label{sec31}

\textbf{\textit{The general settings:}} We follow these steps to generate $Y$, $W$, and $\bm{X}$ for the simulation data: 
\begin{itemize}
	\item[(i)] We initially generated $p$ independent confounding variables $X_1, X_2, \ldots, X_p$, $\text{where} \, p=2,5,7,$  each drawn from a normal distribution $N(0,10)$.
	
	\item[(ii)] To generate the outcome $Y$ and treatment status $W$, we began by forming two coefficient vectors, $\bm{\beta}_2$ and $\bm{\alpha}_1$, for equations (\ref{eq_1}) and (\ref{eq_2}). For $\bm{\beta}_2$, we initially sampled integers randomly from the range of 1 to 9. Subsequently, we normalized this coefficient vector to become a unit vector, with the sign of each element determined using a Bernoulli(0.5). Finally, we set $\bm{\beta}_2$ equal to $k$ times its normalized factor, where $k=1.2$. The same procedure was then repeated to generate $\bm{\alpha}_1$, with $\bm{\alpha}_1$ set to 1 multiplied by the normalized vector. We generated 300 pairs of $\bm{\beta}_2$ and $\bm{\alpha}_1$ with pairwise $sine$ distances approximately evenly distributed over [0, 1]. The $sine$ distance serves as a measure of dissimilarity between two vectors, spanning the range from 0 to 1, where a larger value indicates a greater dissimilarity between the two vectors \cite{wan_interpretation_2021}. We selected various pairs of $\bm{\alpha}_1$ and $\bm{\beta}_2$ with differing sine distances to ensure generalizability, as prior studies typically relied on specific regression coefficient sets, such as proportional coefficients \cite{wan_cautionary_2024}.
	
	\item[(iii)] $W$ was generated using the treatment model (\ref{eq_2}) with the intercept $\alpha_0$ set to -0.9, resulting in approximately 30\% of simulated subjects receiving the treatment. Following this, the outcome variable $Y$ was generated using the linear outcome model (\ref{eq_1}) or (\ref{eq_2}), incorporating an error term $\sim N(0, 1)$. 
\end{itemize}

After generating each simulated data set with a sample size of $5000$, we employed CEM and PSM to address confounding bias. For PSM, we computed the the propensity score for each subject through a correctly specified logistic regression model. Subsequently, a nearest-neighborhood matching algorithm was applied to pair each treated subject with a control subject based on the logit of the estimated propensity score. This matching was conducted without replacement, employing a caliper width equal to $0.2$ times the standard deviation of the logit propensity score \citep{austin_optimal_2011}. For CEM, we utilized an auto-coarsening strategy based on the Struges' rule (``Auto'') and a less coarsening rule of dividing each matching variable into 3 equal groups (``G3''). The implementation of PSM and CEM was carried out in $\texttt{R/MatchIt}$. For each matched dataset, we conducted the analyses described below.

\textbf{\textit{Imbalance, Bias, and Sample size}:} In this analysis, we estimate a homogeneous treatment effect using unadjusted analyses under both CEM and PSM designs. The treatment and outcome models included only additive linear terms of five variables (Scenario 1, Web Appendix 1), with the treatment effect size $\beta_1$ fixed at 6 in the outcome model (\ref{eq_1}). For each matched dataset, we computed the SMD-based multivariate imbalance metrics using formula (\ref{eq_4}) and recorded the matched sample sizes. Treatment effects were then estimated by fitting $\mathcal{M}(W)$ under each design. By evaluating the performance of $\mathcal{M}(W)$ in each design, we assess whether matching alone can minimize imbalance and residual confounding bias.

To assess whether imbalance under CEM is sensitive to sample size, we fixed one pair of $(\bm{\alpha}_1, \bm{\beta}_2)$ from the 300 possible combinations to generate unmatched cohorts with sample sizes ranging from 2,500 to 12,500 and computed the Mahalanobis and SMD distances under CEM at each size.

\textbf{\textit{Model dependence}:} To compare the robustness of CEM (``auto'') and PSM to model misspecification, we introduced nonlinearity into both the outcome model (\ref{eq_1}) and the treatment model (\ref{eq_2}) in Scenario 2 (Web Appendix 1). In this scenario, both models include quadratic and interaction terms for five variables, and the treatment effect $\beta_1$ is fixed at 6, making $\mathcal{M}(W,\bm{X})$ a misspecified model. Both scenarios assume homogeneous treatment effects. In Scenario 1, we fit the correctly specified model $\mathcal{M}(W,\bm{X})$ under CEM (``auto'') and PSM, and computed the resulting biases to compare performance when the model is correctly specified. As a baseline, we also calculated the biases of $\mathcal{M}(W)$ and $\mathcal{M}(W,\bm{X})$ in the unmatched cohorts to show that both CEM and PSM can reduce dependence on correct model specification for unbiased estimation. In Scenario 2, we evaluated the performance of the misspecified model $\mathcal{M}(W,\bm{X})$ under both CEM and PSM. By examining this performance across designs, we assess which design is more robust to model misspecification. This comparison highlights that CEM (``auto'') may yield larger bias when the outcome model is misspecified, due to incorrect adjustment for matched confounders, whereas PSM appears more robust to such bias.

\textbf{\textit{Curse of dimensionality and trade-off with coarsening}:} To examine the sparsity bias CEM may face due to the curse of dimensionality, we increased the number of matching variables to seven and generated data using the treatment and outcome models specified in Scenarios 3 and 4 (Web Appendix 1). In Scenario 3, both models include only additive linear terms of the seven variables, and $\mathcal{M}(W,\bm{X})$ is correctly specified. We applied CEM(``auto''), CEM(``G3''), and PSM. Comparing CEM(``auto'') to PSM allows us to show that CEM(``auto'') suffers from sparsity bias in high-dimensional settings, even with a correctly specified outcome model, whereas PSM avoids this issue. The comparison between CEM(``G3'') and PSM illustrates whether reduced coarsening can help mitigate sparsity bias. In Scenario 4, both models include quadratic and interaction terms of the seven variables, rendering $\mathcal{M}(W,\bm{X})$ misspecified. Here, we compare CEM(``G3'') and PSM to demonstrate that while greater coarsening may help CEM alleviate sparsity issues, it does so at the cost of increased residual confounding and model dependence. As a result, CEM(``G3'') is expected to perform worse than PSM. For each matched dataset, we fit $\mathcal{M}(W,\bm{X})$ and computed the bias, variance, and MSE of the estimated treatment effect across methods.

\textbf{\textit{Heterogeneous treatment effect}:} We conducted a simulation study to evaluate the bias in estimating the PATT when treatment effect heterogeneity is present. Two variables, $X_1$ and $X_2$, were generated from a standard normal distribution. Data were generated under both linear and nonlinear treatment assignment and outcome models, with two treatment prevalence scenarios ($\sim 10\%$ and $\sim 30\%$) determined by varying the intercept in the propensity model. The treatment effect depended on $X_1$, creating heterogeneity such that individuals with higher $X_1$ values experienced larger effects. True PATT values were obtained by simulating $10^8$ observations and averaging individual treatment effects among treated units. (Details in Web Appendix 2)

For each simulated dataset, three matching strategies—CEM (``auto''), CEM (``G3''), and PSM with a caliper of 0.2 $\times$ the logit propensity score were applied. PATT was then estimated using either a unadjusted model $\mathcal{M}(W)$ or the following formula:
\begin{align}
	\hat{\tau}_{PATT}=\hat{\beta}_1+\hat{\theta}\cdot \hat{\mathbb{E}}(X_i|W_i=1)
	\label{eq_5}
\end{align}
where $\hat{\beta}_1$ and $\hat{\theta}$ were estimated using the interaction model $\mathcal{M}(W, W \cdot X_1, X_1, X_2)$ (including $W$, $W$ by $X_1$ interaction, linear terms of $X_1$ and $X_2$), which is misspecified under nonlinear outcome generation. $\hat{\mathbb{E}}(X_i|W_i=1)$ is the estimated conditional mean of $X_1$ among the treated. When the treated proportion is $\sim 10\%$, one treated subject could have nine untreated subjects available for pairing, and thus the likelihood that all treated subjects enter the matched data is high, allowing $\mathcal{M}(W)$ to estimate $\hat{\tau}_{PATT}$ with minimal bias in matching designs because of small bias in estimating $\mathbb{E}(X_i|W_i=1)$ using sample mean of $X_1$ among matched treated subjects. However, when the treated proportion is high (i.e.,$\sim 30\%$), the chance that some treated subjects are discarded because they fail to pair with untreated subjects increases. In this case, the distribution of $X_1$ among the treated subjects in the matched data may not represent the distribution of $X_1$ among the treated subjects in the original population, and the sample mean of $X_1$ among the treated subjects in the matched data will be a biased estimate of $\mathbb{E}(X_i|W_i=1)$. In this situation, $\mathcal{M}(W)$ will be biased. To correct this bias, we substitute formula (\ref{eq_5}) with the sample mean of $X_1$ among the treated subjects in the unmatched data as follows:
\begin{align}
	\hat{\tau}_{PATT}=\hat{\beta}_1 + \hat{\theta} \cdot \frac{\sum_{i=1}^{N} X_{i1} \cdot I(W_i = 1)}{\sum_{i=1}^N I(W_i = 1)}
	\label{eq_6}
\end{align}
where $N$ is the sample size of the unmatched cohort and $I(\cdot)$ is the indicator function. This process was repeated 2,000 times, and estimates were averaged to assess performance across designs and scenarios.

A key distinction between our simulation and the earlier study by Iacus et al. \citep{iacus_causal_2012} is that, unlike their work—which used a misspecified model to estimate the propensity score—we employed a correctly specified model. In their case, the estimated propensity scores do not satisfy the balancing score or SITA assumptions. Our study seeks to evaluate whether PSM—despite possessing the balancing score and SITA properties-is, as suggested in the clinical literature \citep{haider_comparative_2013, Muennig_2016, wharam_breast_2018, sceats_nonoperative_2019, liu_propensity-score_2023, george_outcomes_2024}, theoretically inferior to CEM in terms of balancing matched confounders and reducing model dependence and bias.

\subsection{Simulation results}
\label{sec32}

Figure \ref{fig1} presents the comparison between CEM and PSM using the unadjusted model $\mathcal{M}(W)$.  Unadjusted estimates are less biased in PSM than in CEM, and CEM(``auto'') yields less biased estimates than CEM(``G3'') (Figure \ref{fig1}a). This is because CEM tends to yield greater covariate imbalance than PSM, and this imbalance contributes to residual confounding (Figure \ref{fig1}b). Reducing the level of coarsening, from auto-coarsening to grouping into three categories, increases sample size but also increases imbalance (Figures \ref{fig1}b and \ref{fig1}c). As shown in Figure \ref{fig1}d, the imbalance in CEM, measured using both SMD-based and Mahalanobis distance metrics, does not converge to zero as the sample size increases. Unlike in PSM, this imbalance in CEM is systematic rather than random and an increase in imbalance tends to amplify residual confounding bias.

Figure \ref{fig2} compares CEM (``auto'') and PSM using the adjusted model $\mathcal{M}(W,\bm{X})$. When $\mathcal{M}(W,\bm{X})$ represents the true outcome model (Figure \ref{fig2}a), both PSM and CEM(``auto'') yield small biases, with PSM performing slightly better—possibly due to its larger matched sample size. However, when $\mathcal{M}(W,\bm{X})$ is misspecified (Figure \ref{fig2}b), the performance of CEM(``auto'') deteriorates relative to PSM. Still, compared to the unmatched design (also in Figure \ref{fig2}c), both CEM(``auto'') and PSM substantially reduce bias, supporting the notion that a good matching design can mitigate model dependence, even when matching is not exact.

Figure \ref{fig3} compares CEM (``auto'' and ``G3'') and PSM using the adjusted model $\mathcal{M}(W,\bm{X})$ in settings where the number of matched confounders is relatively large given the sample size. With auto-coarsening, CEM suffers from severe data loss due to the curse of dimensionality. As a result, even when $\mathcal{M}(W,\bm{X})$ is correctly specified, effect estimates can become highly volatile and biased, often yielding extreme values. Variance and MSE estimates of $\mathcal{M}(W,\bm{X})$ are similarly unstable (Figures \ref{fig3}A–C). Increasing the level of coarsening—for example, by grouping confounders into three categories—can mitigate data loss. In this case, and assuming a correctly specified outcome model, CEM can slightly outperform PSM in terms of bias, precision, and MSE (Figures \ref{fig3}D–F), owing to its larger matched sample size. However, this increased coarsening also raises imbalance and amplifies residual confounding, making correct model specification essential to mitigate bias. As shown in Figures \ref{fig3}G–I, when $\mathcal{M}(W,\bm{X})$ is misspecified, CEM performs worse than PSM, exhibiting greater bias, variance, and MSE.

Table \ref{table1} summarizes results across four simulation scenarios varying the outcome model specification and proportion treated. When $\mathcal{M}(W, W \cdot X_1, X_1, X_2)$ represents the true outcome model (Scenarios 1–2), model-based estimates using $\mathcal{M}(W)$ in PSM consistently yield lower bias, variance, and RMSE than in CEM (``auto'') and especially CEM (``G3''), where coarser grouping produces larger residual confounding. Analytic formula–based estimate using formula (\ref{eq_6}) perform similarly across all three designs because $\mathcal{M}(W, W \cdot X_1, X_1, X_2)$ produces unbiased estimates of $\beta_1$ and $\theta$. When proportion treated is low, analytic estimates in PSM and CEM (``auto'') are generally comparable to the corresponding model-based estimates. When proportion treated is high, analytic estimates in PSM and CEM (``auto'') outperforms the corresponding model-based estimates. When the outcome model is misspecified (Scenarios 3–4), analytic estimates in PSM outperforms CEM (``auto'') and especially CEM (``G3'') because misspecified model $\mathcal{M}(W, W \cdot X_1, X_1, X_2)$ in CEM give more biased estimates of $\beta_1$ and $\theta$ than in PSM. In these misspecified settings, analytic estimates outperform the corresponding model-based estimates in PSM and CEM (``auto''). Both estimates have large biases in CEM(``G3'') due to coarsening and residual confounding.  Overall, analytic formula (\ref{eq_6}) in PSM performs consistently the best across four scenarios.

In summary, we observe similar patterns for both homogeneous and heterogeneous treatment effects when using PSM or CEM. When matching involves few variables, PSM performs comparably to—or slightly better than—CEM with auto-coarsening, and both are similarly robust to model misspecification when residual confounding is small. CEM (``G3'') performs substantially worse due to greater residual confounding and heightened model dependence. Under high dimensionality, fine coarsening (e.g., auto-coarsening) is infeasible. Additional coarsening reduces data loss but introduces substantial residual confounding, increases model dependence, and raises the risk of bias from model misspecification in CEM.

\section{Discussion}
\label{sec6}

We have shown that the purported advantages of CEM over PSM do not consistently hold, particularly as the number of matched confounders increases. CEM, an inexact covariate-matching method due to its coarsening step, exhibits poorer covariate balance than PSM, even with auto-coarsening. This imbalance is structural, persisting even as sample size grows, and leads to residual confounding, a key limitation absent from the discussions in previous studies \citep{iacus_causal_2012,Ripollone_2019,Black_2020}. The curse of dimensionality further exacerbates the problem: while auto-coarsening may yield comparable performance to PSM with few covariates, its performance deteriorates as dimensionality increases, causing data loss, unstable effect estimates, and greater model dependence. Although more aggressive coarsening can mitigate data loss, it amplifies residual confounding and statistical bias. In contrast, PSM achieves better covariate balance, lower bias, and reduced reliance on outcome modeling. By framing the comparison through the theoretical properties of the two designs rather than relying solely on simulations, this study clarifies why CEM’s structural imbalance makes it different from, and often less effective than PSM, despite common misconceptions that it is an exact matching method.

Frequency matching, a coarsened matching variant often used in case-control studies, is similarly prone to residual confounding due to its inherent inexactness \citep{wan_mat_2021,wan_conditional_2022,wan_does_2024}. The CEM literature often overlooks this issue, and CEM is frequently misconceived as an exact matching design. For instance, Sceats et al. claimed that CEM mimics a fully blocked randomized experiment \citep{sceats_nonoperative_2019}, and Blackwell et al. \cite{blackwell_cem:_2009} suggested that a simple mean difference suffices to estimate treatment effects. Iacus et al. \citep{iacus_causal_2012} argued that CEM rests on the principle of grouping ``substantively indistinguishable values.'' However, this relies on a problematic assumption: that matching on $\bm{C}$ is equivalent to matching on $\bm{X}$. This leads to a form of circular logic—coarsening reduces dimensionality, but to justify its validity, one must assume that coarsening preserves all relevant information in $\bm{X}$. In high-dimensional or continuous settings, this assumption is unrealistic and undermines the rationale for coarsening. In practice, coarsening choices are driven more by feasibility than by theoretical justification, making residual confounding difficult to avoid. The magnitude of such bias in CEM ultimately depends on the degree of coarsening.

Our study also addresses the common concern that PSM increases imbalance and that CEM offers superior covariate balance \citep{king_why_2019,Ripollone_2019}. These claims often rely on inappropriate metrics, such as the Mahalanobis distance, which fails to account for the direction of chance imbalance. Using a multivariate SMD metric, we show that PSM with an appropriately chosen caliper achieves better covariate balance than CEM with auto-coarsening. Moreover, unadjusted analysis is suitable for PSM, which inherently balances confounders \citep{rosenbaum_central_1983}, but not for CEM, where coarsening introduces systematic imbalance. Crucially, this imbalance in CEM does not diminish with increasing sample size. In contrast, imbalance in exactly matched PSM designs arises from random variation and converges to zero as sample size increases \cite{rosenbaum_design_2020,wan_psmparadox_2025}. Mitigating residual confounding in CEM requires accurate outcome modeling, which increases model dependence. When outcome models are misspecified, CEM yields more biased estimates than PSM. By contrast, PSM can produce unbiased estimates even under a class of misspecified models.

This highlights a broader point: a good matching design can reduce reliance on correct model specification, even when matching is inexact \citep{ho_matching_2007,guo_statistical_2023}. Our results support this view, demonstrating that both PSM and covariate matching reduce model dependence relative to unmatched designs. However, exact covariate matching becomes impractical in high dimensions due to a rapidly shrinking number of matches \citep{abadie_matching_2016}. Similarly, CEM with auto-coarsening suffers from data loss, increased bias, and reduced precision as dimensionality increases \citep{Black_2020, Ripollone_2019}, a pattern corroborated in our simulations. As the number of matching variables grows, CEM becomes susceptible to sparsity bias and requires increasingly aggressive coarsening. Although greater coarsening can mitigate sparsity bias, it also amplifies residual confounding and can introduce substantial bias if the outcome–confounder relationship is not correctly modeled in the post-matching analysis.

We conclude that PSM is more robust than CEM when the balancing score and SITA properties hold. However, like any statistical design or method, PSM has limitations, and alternatives such as weighting could be explored \citep{Busso_2014}. For example, PSM cannot address unmeasured confounding, making sensitivity analyses necessary to assess its influence. Prior research has shown that PSM can yield more biased estimates than CEM when the propensity score model is misspecified \citep{iacus_causal_2012,lenis_measuring_2018}. A misspecified propensity score may violate the balancing and SITA assumptions, leaving residual confounding after matching. To mitigate this risk, flexible machine learning methods—such as generalized boosted regression—can be used to estimate the propensity score, thereby capturing nonlinearities and interactions among covariates \citep{McCaffrey_2004}.


\newpage

\section{Appendix}

\begin{customprop}{1}
	\label{prop1}
	Suppose we fit a linear regression model in an exactly matched PSM design to estimate ATT, defined as follows:
	\begin{align*}
		\mathbb{E}(Y_i|\boldsymbol{\tilde{X}_i})=\gamma_0+\gamma_1 W_i +  \boldsymbol{\tilde{X}_i} \boldsymbol{\gamma_{2}}^\top\, , \,  i=1,2,\cdots, n,
	\end{align*}
	where $\boldsymbol{\tilde{X}}_i$ is a $1 \times K$ subset of $\boldsymbol{X}$, with $0 \leq K \leq p$. $\boldsymbol{\tilde{X}}_i$ may be an empty set or include any combination of confounders $\boldsymbol{X}$. $\boldsymbol{\gamma_{2}}$ is $1 \times K$ vector of regression coefficients. Here, $n$ represents the sample size of the PSM design. The resulting OLS estimator satisfies
	\begin{align*}
		\hat{\gamma}_1 \overset{p}{\to} \mathbb{E}(Y(1)-Y(0)|W=1) ,
	\end{align*} 
\end{customprop}

\begin{proof}
	Let $\boldsymbol{1}$ denote a $n\times1$ vector of 1s,  $\boldsymbol{\tilde{X}}$ denote $n \times K$ matrix of adjusted confounders, $\boldsymbol{W}$ denote $n\times1$ vector of the treatment status, $\boldsymbol{Y}$ denote $N\times1$ vector of the outcome.	The $N\times(K+2)$ design matrix for the linear model is $[\boldsymbol{1} \, \boldsymbol{\tilde{X}} \, \boldsymbol{W}]$.  We let $S$ denote the status of being selected into the PSM design from the source data. We further denote the first two components $[1 \, \boldsymbol{\tilde{X}}]_{n\times(K+1)}$ as $Z_1$, and the last column as $Z_2$. Thus, the design matrix can be rewritten as $[Z_1 \, Z_2]$, and the least-square estimator of 
	$\gamma_1$  can be derived as:
	\begin{align*}
		\hat{\gamma}_{1}=(Z^T_{2|1}Z_{2|1})^{-1}Z^{T}_{2|1}\boldsymbol{Y}
	\end{align*}
	where $Z_{2|1}=(I-Z_1(Z^T_1Z_1)^{-1}Z^T_1)Z_2$. $H=Z_1(Z^T_1Z_1)^{-1}Z^T_1$ is the hat matrix. $HZ_2$ is the projection of $W$ onto the space spanned by $\boldsymbol{\tilde{X}}$. $Z_{2|1}$ can be interpreted as the residual from fitting a linear regression model of confounders $\boldsymbol{\tilde{X}}$ on $W$ in the matched data. we proceed as follows:
	
	\begin{itemize}
		\item[(1)] {\it Derivation of probability limit of projection $HZ_2$.}  We relabel $K$ confounders in $\boldsymbol{\tilde{X}}$ as $X_1, X_2, \cdots, X_K$. The closed-form matrix inversion becomes highly complex with more than one $X$ variables. For simplicity, we use mean-centered variables, $\boldsymbol{\tilde{X}^*}=[X^*_1, X^*_2, \cdots, X^*_K ]$ , in the PSM design. A demonstration for a single $X$ without mean-centering is provided in the web material. The design matrix \( Z_1 \) is:
		\[
		Z_1 = \begin{bmatrix} 
			1 & X^*_{1,1} & X^*_{2,1} &\hdots & X^*_{K,1} \\
			1 & X^*_{1,2} & X^*_{2,2} &\hdots & X^*_{K,2} \\
			\vdots & \vdots & \vdots &\hdots & \vdots \\
			1 & X^*_{1,N} & X^*_{2,N} &\hdots & X^*_{K,N}
		\end{bmatrix}.
		\]
		
		\[
		Z^T_1 Z_1 =
		\begin{bmatrix}
			N & \sum X^*_{1,i} & \sum X^*_{2,i} & \hdots & \sum X^*_{K,i} \\
			\sum X^*_{1,i} & \sum X^{*2}_{1,i} & \sum X^*_{1,i} X^*_{2,i} &\hdots & \sum X^*_{1,i} X^*_{K,i} \\
			\sum X^*_{2,i} & \sum X^*_{1,i} X^*_{2,i} & \sum X^{*2}_{2,i} &\hdots & \sum X^*_{2,i} X^*_{K,i} \\
			\vdots & \vdots & \vdots &\hdots & \vdots \\
			\sum X^*_{K,i} & \sum X^*_{1,i} X^*_{K,i} & \sum X^*_{2,i}X^*_{K,i} &\hdots & \sum X^{*2}_{K,i}
		\end{bmatrix}.
		\]
		This matrix captures the sums of squares and cross-products of $Z_1$. The inverse $(Z^T_1 Z_1)^{-1}$ exists if $Z$ has full rank and yields a $(K+1) \times (K+1)$ matrix. 
		
		Next, we compute \( Z^T_1 Z_2 \) as
		\[
		Z^T_1 Z_2 = 
		\begin{bmatrix}
			\sum_{i=1}^n W_i \\
			\sum_{i=1}^n W_i X^*_{1,i} \\
			\sum_{i=1}^n W_i X^*_{2,i} \\
			\vdots                   \\
			\sum_{i=1}^n W_i X^*_{K,i}
		\end{bmatrix}.
		\]
		
		$\frac{1}{N} (Z^T_1 Z_1) \overset{p}{\to} \mathbb{E}(Z^T_1 Z_1)$, where 
		\[
		\mathbb{E}(Z^T_1 Z_1) =
		\begin{bmatrix}
			1 & \mathbb{E}(X^*_{1,i}|S_i=1) & \mathbb{E}(X^*_{2,i}|S_i=1) & \hdots & \mathbb{E}(X^*_{K,i}|S_i=1) \\
			\mathbb{E}(X^*_{1,i}|S_i=1) & \mathbb{E}(X^{*2}_{1,i}|S_i=1) & \mathbb{E}(X^*_{1,i} X^*_{2,i}|S_i=1) &\hdots & \mathbb{E}(X^*_{1,i} X^*_{K,i}|S_i=1) \\
			\mathbb{E}(X^*_{2,i}|S_i=1) & \mathbb{E}(X^*_{1,i} X^*_{2,i}|S_i=1) & \mathbb{E}(X^{*2}_{2,i}|S_i=1) &\hdots & \mathbb{E}(X^*_{2,i} X^*_{K,i}|S_i=1) \\
			\vdots & \vdots & \vdots &\hdots & \vdots \\
			\mathbb{E}(X^*_{K,i}|S_i=1) & \mathbb{E}(X^*_{1,i} X^*_{K,i}|S_i=1) & \mathbb{E}(X^*_{2,i} X^*_{K,i}|S_i=1)&\hdots & \mathbb{E}(X^{*2}_{K,i}|S_i=1)
		\end{bmatrix}.
		\]
		Here, $S_i=1$ indicates selection into the matched design. Since $X^*_{k,i}$ is defined as mean-centered, we have $\mathbb{E}(X^*_{k,i}|S_i=1)=0, \, \forall \, k=1,2,\cdots, K$. Next,
		\[
		\frac{1}{N}Z_1^T Z_2  = 
		\begin{bmatrix}
			\frac{\sum_{i=1}^n W_i}{n} \\
			\frac{\sum_{i=1}^n W_i X^*_{1,i}}{n} \\
			\frac{\sum_{i=1}^n W_i X^*_{2,i}}{n} \\
			\vdots                  \\
			\frac{\sum_{i=1}^n W_i X^*_{K,i}}{n}
		\end{bmatrix}.
		\]
		
		$\frac{1}{n}Z_1^T Z_2 \overset{p}{\to} \mathbb{E}(Z^T Z_2)$, where 
		\[
		\mathbb{E}(Z_1^T Z_2) = 
		\begin{bmatrix}
			\mathbb{E}(W_i|S_i=1) \\
			\mathbb{E}(W_i X^*_{1,i}|S_i=1) \\
			\mathbb{E}(W_i X^*_{2,i}|S_i=1) \\
			\vdots                   \\W
			\mathbb{E}(W_i X^*_{K,i}|S_i=1)
		\end{bmatrix}.
		\]
		
		Under the balancing score property of propensity score $\boldsymbol{X} \independent W | e(\boldsymbol{X}) $, it follows that each $X$ in $\boldsymbol{\tilde{X}}$ is not associated with $W$ in the PSM design. subsequently, $\mathbb{E}(W_i X^*_{k,i}|S_i=1) =0, \, \forall \, k=1,2,\cdots, K$. 	$\mathbb{E}(W_i|S_i=1)=\delta$, the proportion of the treated subjects in the matched design. e.g., for a 1:1 matching design, $\delta=0.5$. It follows that
		
		\[
		\mathbb{E}(Z_1^T Z_2) = 
		\begin{bmatrix}
			\delta \\
			0\\
			0\\
			\vdots                   \\
			0
		\end{bmatrix}.
		\]
		and
		\[
		(Z^T_1 Z_1)^{-1} Z_1^T Z_2 \overset{p}{\to} \mathbb{E}(Z^T_1 Z_1)^{-1} \mathbb{E}(Z_1^T Z_2) =
		\begin{bmatrix}
			\delta \\
			0\\
			0\\
			\vdots                   \\
			0
		\end{bmatrix}.
		\]
		Here, $(Z_1^T Z_1)^{-1} Z_1^T Z_2$ represents the estimated regression coefficients for the model fitting $W$ on $\boldsymbol{\tilde{X}}$. In the PSM-matched design, the coefficients of $\boldsymbol{\tilde{X}}$ are asymptotically zero due to balance or association with $W$. The first element corresponds to the intercept, representing the grand average of $W$, or the treatment proportion in the matched design.
		
		\[HZ_2= Z_1 \big( (Z^T_1 Z_1)^{-1}Z_1^T Z_2 \big) \overset{p}{\to} Z_1 
		\begin{bmatrix}
			\delta \\
			0\\
			0\\
			\vdots                   \\
			0
		\end{bmatrix}   \\
		=\delta \boldsymbol{1}
		\]
		
		\item[(2)] {\it Derivation of probability limit of $Z_{2|1}$}
		
		\[Z_{2|1}=Z_2-\delta \boldsymbol{1}=
		\begin{bmatrix}
			W_1-\delta \\
			W_2-\delta \\
			W_3-\delta \\
			\vdots                   \\
			W_K-\delta 
		\end{bmatrix}.
		\]
		
		$Z_{2|1}^TZ_{2|1} =\sum^n_{i=1}W^2_i-2\delta \sum^n_{i=1}W_i+\sum^n_{i=1}\delta^2=\sum^n_{i=1}W_i-2\delta \sum^n_{i=1}W_i+N\delta^2 $ and  $Z^T_{2|1}Y=\sum^n_{i=1}(W_i-\delta)Y_i =\sum^n_{i=1}(W_i Y_i)-\delta \sum^n_{i=1}Y_i$

		\item[(3)] {\it Derivation of probability limit of projection $(Z_{2|1}^TZ_{2|1})^{-1}Z^T_{2|1}Y$.} 
		
		It follows that
		\begin{align*}
			(Z_{2|1}^TZ_{2|1})^{-1}Z^T_{2|1}Y &=\frac{(\sum^n_{i=1}(W_i Y_i)-\delta \sum^n_{i=1}Y_i)/n}{(\sum^n_{i=1}W_i-2\delta \sum^n_{i=1}W_i+n\delta^2)/n } \\
			&\overset{p}{\to} \frac{\mathbb{E}(W_iY_i|S_i=1)-\mathbb{E}(W_i|S_i=1)\mathbb{E}(Y_i|S_i=1)}{\delta(1-\delta)} \\
			&=\frac{Cov(W_i,Y_i|S_i=1)}{Var(W_i|S_i=1)}
		\end{align*}
		$\frac{Cov(W_i,Y_i|S_i=1)}{Var(W_i|S_i=1)} $ represents the coefficient of $W$ from an univariate regression of $Y$ on $W$ in the PSM design. Furthermore, 
		\begin{align*}
			\mathbb{E}(W_iY_i|S_i=1) &=Y_i \mathbb{P}(Y_i,W_i=1|S_i=1) =Y_i \mathbb{P}(Y_i|W_i=1,S_i=1)\mathbb{P}(W_i=1|S_i=1)\\
			&=\mathbb{E}(Y_i|W_i=1,S_i=1)\mathbb{P}(W_i=1|S_i=1) =\mathbb{E}(Y_i|W_i=1,S_i=1)\mathbb{E}(W_i|S_i=1)
		\end{align*}
		and 
		\begin{align*}
			\mathbb{E}(Y_i|S_i=1) &=\mathbb{E}(Y_i|W_i=1,S_i=1)\mathbb{P}(W_i=1|S_i=1) + \mathbb{E}(Y_i|W_i=0,S_i=1)\mathbb{P}(W_i=0|S_i=1) \\
			&=\mathbb{E}(Y_i|W_i=1,S_i=1)\mathbb{E}(W_i|S_i=1) + \mathbb{E}(Y_i|W_i=0,S_i=1)(1-\mathbb{E}(W_i|S_i=1))
		\end{align*}
		Thus, 
		\begin{align*}
			\mathbb{E}(W_iY_i|S_i=1)-\mathbb{E}(W_i|S_i=1)\mathbb{E}(Y_i|S_i=1) &= (\mathbb{E}(Y_i|W_i=1,S_i=1)-\mathbb{E}(Y_i|W_i=0,S_i=1)) \mathbb{E}(W_i|S_i=1)(1-\mathbb{E}(W_i|S_i=1)) \\
			&=(\mathbb{E}(Y_i|W_i=1,S_i=1)-\mathbb{E}(Y_i|W_i=0,S_i=1)) \delta(1-\delta) 
		\end{align*}
		
		It follows that
		\begin{align*}
			(Z_{2|1}^TZ_{2|1})^{-1}Z^T_{2|1}Y &\overset{p}{\to} \mathbb{E}(Y_i|W_i=1,S_i=1)-\mathbb{E}(Y_i|W_i=0,S_i=1) \\
			&=\mathbb{E}(Y_i(1)|W_i=1,S_i=1)-\mathbb{E}(Y_i(0)|W_i=0,S_i=1) \\
			&=\mathbb{E}(Y_i(1)|W_i=1,S_i=1)-\mathbb{E}(Y_i(0)|W_i=1,S_i=1) \\
			&= \mathbb{E}(Y_i(1)-Y_i(0)|W_i=1,S_i=1) \\
			&= \mathbb{E}(Y_i(1)-Y_i(0)|W_i=1)
		\end{align*}
		Last equality holds because under the strongly ignorability condition $\{ Y(1),Y(0) \} \independent W | e(\bm{X})$, $\mathbb{E}(Y_i(0)|W_i=0,S_i=1)=\mathbb{E}(Y_i(0)|W_i=1, S_i=1)$. The chance of being selected into the matched design ($S_i=1$) for the treated does not depend on either $Y_i(1)$ or $Y_i(0)$. e.g., we normally select all treated subjects into the matched design when estimating ATT.

	\end{itemize}

\end{proof}

\newpage

\begin{figure}[h!]
	\centering
	\includegraphics[width=\linewidth]{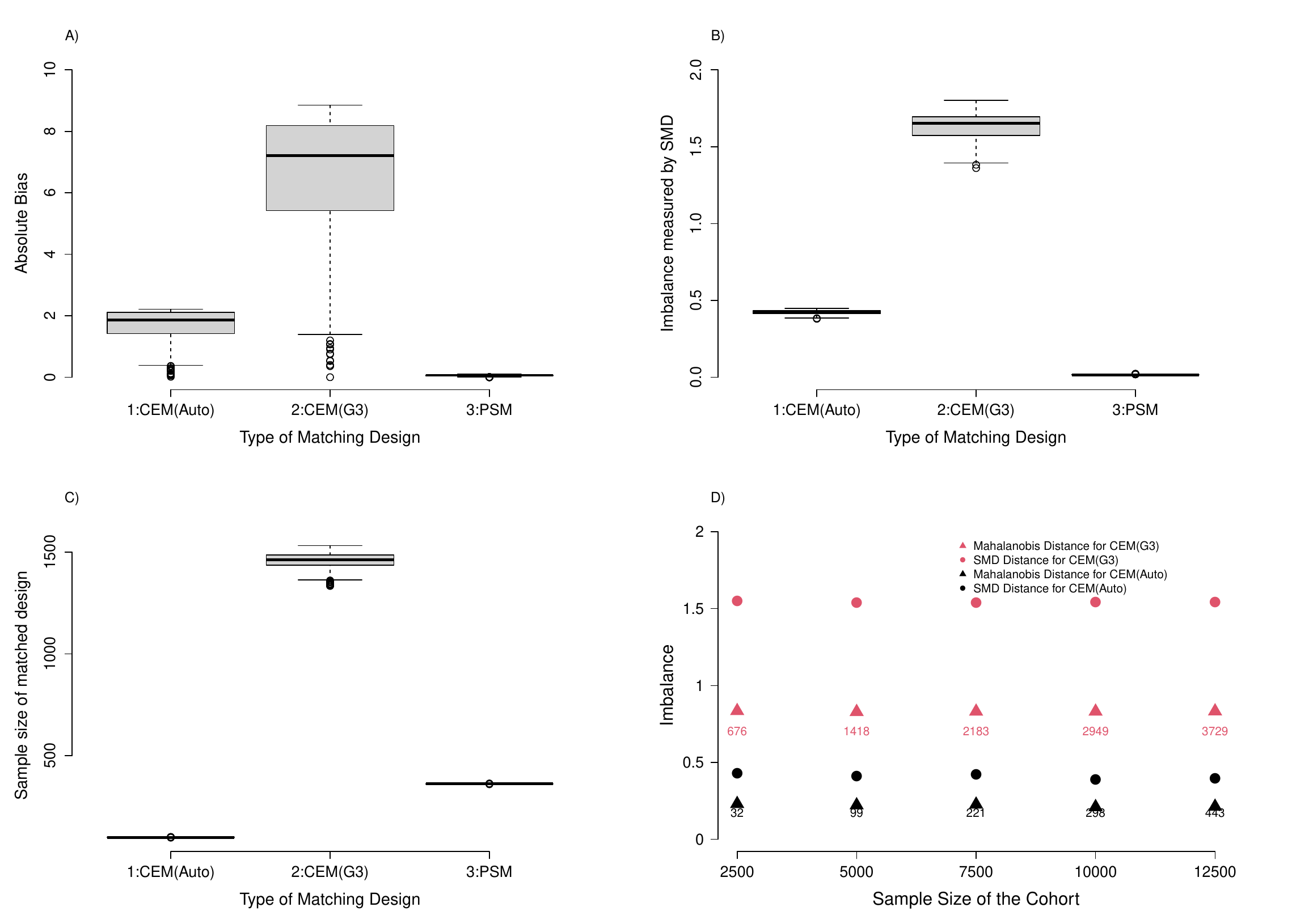}
	\caption{A) Absolute biases of unadjusted estimates in CEM (``Auto''), CEM (``G3''), and PSM with a caliper width equal to $0.2$ times the standard deviation of the logit propensity score; B) SMD imbalance measures for CEM (``Auto''), CEM (``G3''), and PSM; C) Sample sizes for CEM (``Auto''), CEM (``G3''), and PSM; D) Mahalanobis and SMD imbalance measures for CEM (``Auto'') and CEM (``G3''). The numbers beneath the red and black triangle symbols represent the average sample sizes of the matched cohorts.}
	\label{fig1}
\end{figure}

\begin{figure}[h!]
	\centering
	\includegraphics[width=\linewidth]{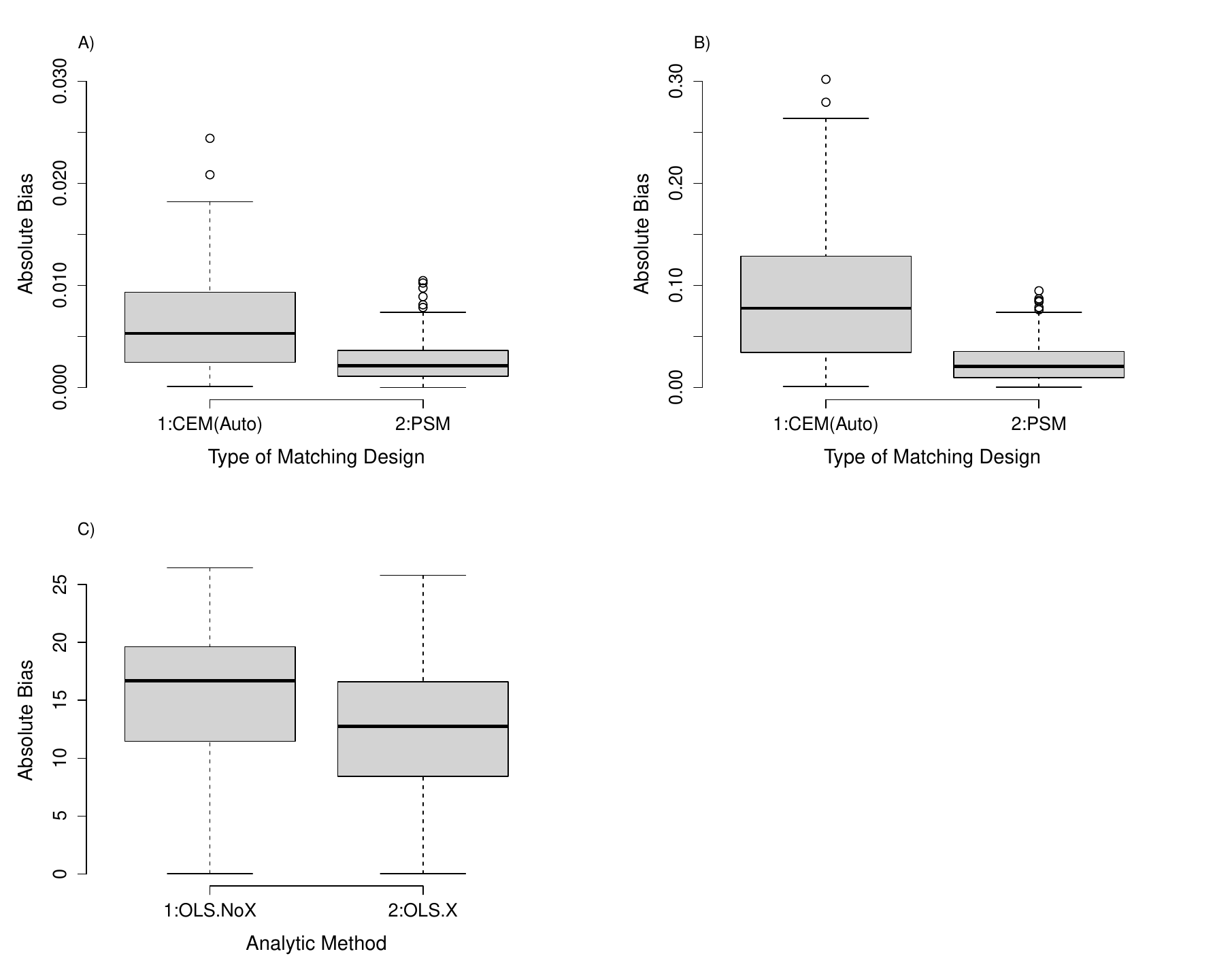}
	\caption{A) Absolute biases of adjusted estimates in CEM (``Auto'') and PSM when $\mathcal{M}(W,\bm{X})$ is the correct model ; B) Absolute biases of adjusted estimates in CEM (``Auto'') and PSM when $\mathcal{M}(W,\bm{X})$ is the misspecified model; C) Absolute biases of $\mathcal{M}(W)$ and $\mathcal{M}(W,\bm{X})$ in unmatched cohorts.}
	\label{fig2}
\end{figure}

\begin{figure}[h!]
	\centering
	\includegraphics[width=\linewidth]{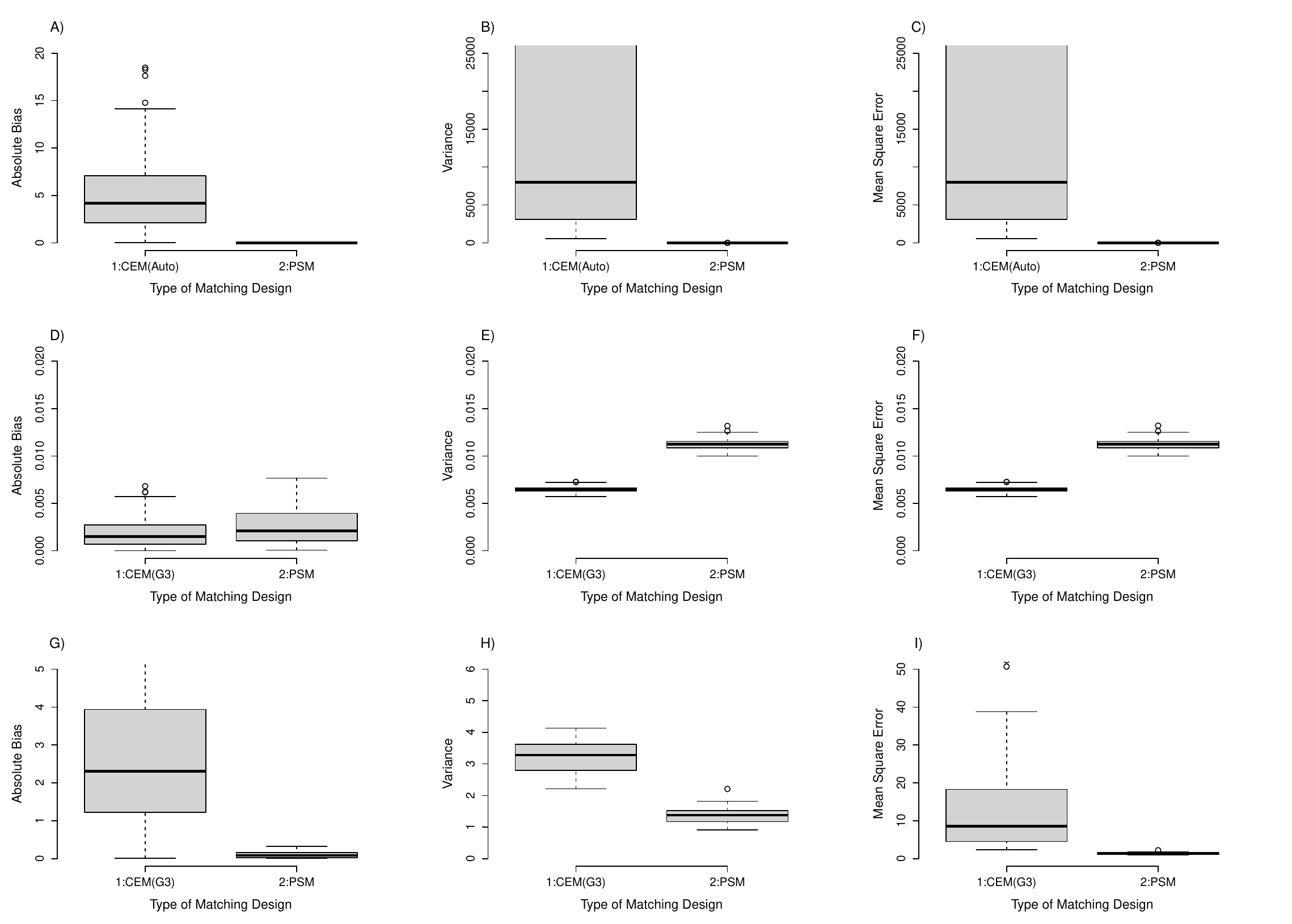}
	\caption{A–C: Absolute bias, variance, and MSE of adjusted estimates in CEM(``Auto'') and PSM when $\mathcal{M}(W,\bm{X})$ is the true model. D–F: Same metrics for CEM(``G3'') and PSM when $\mathcal{M}(W,\bm{X})$ is the true model. G–I: Same metrics for  CEM(``G3'') and PSM when $\mathcal{M}(W,\bm{X})$ is the misspecified model.}
	\label{fig3}
\end{figure}

\begin{table}[h!]
	\begin{center} 
		\caption{Simulation results for the estimation of heterogeneous treatment effects} 
		\begin{tabular}{c c c c c c c c} 
			\toprule
			Scenario No&Proportion of treated &Design &Model & Mean Estimate & Bias & SD & Root MSE\\ [0.5ex] 
			\hline 
			$1^*$ &$\sim 10\%$ & PSM & $\mathcal{M}(W)$ & 0.7353 &0.0136 &0.0764 &0.0776 \\ 
			& &     & Formula (\ref{eq_6}) &0.7197 &-0.0020 &0.0695 &0.0695\\
			& & CEM(``Auto'') &$\mathcal{M}(W)$ &0.7579 &0.0363 &0.0699 &0.0787 \\
			& &     & Formula (\ref{eq_6}) &0.7221 &0.0004 &0.0837 &0.0702\\		
			& & CEM(``G3'') &$\mathcal{M}(W)$ &1.2322&0.5105&0.0837&0.5174 \\
			& &     & Formula (\ref{eq_6}) &0.7198 &-0.0019 &0.0780 &0.0780
			
			\\ [1ex] 
			2& $\sim 30\%$ & PSM &$\mathcal{M}(W)$ &0.9662 &0.1079 &0.0441 &0.1166 \\
			& &     & Formula (\ref{eq_6})&0.8595 &0.0012 &0.0423 &0.0423 \\
			& & CEM(``Auto'') &$\mathcal{M}(W)$ &0.9744 &0.1162 &0.0421 &0.1236 \\
			& &     & Formula (\ref{eq_6})  &0.8588 &0.00051 &0.0476 &0.0432 \\
			& & CEM(``G3'') &$\mathcal{M}(W)$ &1.3300 &0.4717 &0.0476 &0.4741\\
			&  &   & Formula (\ref{eq_6}) &0.8590 &0.0007 &0.0455 &0.0455  \\                	
			\\	
			
			3&$\sim 10\%$ & PSM & $\mathcal{M}(W)$ &1.6023 &-0.1487 &0.0782 &0.1680\\ 
			& &     & Formula (\ref{eq_6}) &1.7458 &-0.0052 &0.0793 &0.0794\\
			& & CEM(``Auto'') &$\mathcal{M}(W)$ &1.5746 &-0.1764 &0.0755 &0.1918     	 \\
			& &     & Formula (\ref{eq_6})  &1.7109 &-0.0401 &0.1027 &0.0871\\
			& & CEM(``G3'') &$\mathcal{M}(W)$ &0.4362 &-1.3147 &0.1027 &1.3187\\
			&  &   & Formula (\ref{eq_6}) &0.6419 &-1.1091 &0.1292 &1.1166 \\                	
			
			\\ 	
			
			4&$\sim 30\%$ & PSM & $\mathcal{M}(W)$ &1.2808 &-0.2028 &0.0459 &0.2079 \\ 
			& &     & Formula (\ref{eq_6}) &1.4785 &-0.0052 &0.0472 &0.0475 \\
			& & CEM(``Auto'') &$\mathcal{M}(W)$ &1.2660 &-0.2177 &0.0438 &0.2221  	 \\
			& &     & Formula (\ref{eq_6}) &1.4481 &-0.0355 &0.0524 &0.0590\\
			& & CEM(``G3'') &$\mathcal{M}(W)$ &0.7203 &-0.7634 &0.0524 &0.7652 \\
			&  &   & Formula (\ref{eq_6}) &0.7298 &-0.7539 &0.0606 &0.7563\\                	
			\\ 	                	
			\bottomrule
			\multicolumn{8}{l}{\small *$\mathcal{M}(W, W \cdot X_1, X_1, X_2)$ represents the true outcome model in Scenarios 1 and 2, but is a misspecified model in Scenarios 3 and 4.} \\
		\end{tabular}
		\label{table1} 
	\end{center}
	
\end{table}	

\bibliographystyle{plainnat}
\bibliography{references}

\end{document}